\documentclass[onecolumn,prl,superscriptaddress,notitlepage]{revtex4-1}

\usepackage{natbib,enumitem}
\usepackage{amsmath,amssymb,amsthm,mathrsfs,amsfonts}
\usepackage[latin9]{inputenc}
\usepackage{units}
\usepackage{graphicx}
\usepackage{esint}
\usepackage{braket}
\usepackage{multirow}
\usepackage{slashbox}
\usepackage{array}
\usepackage{lipsum}
\usepackage{color}

\makeatletter

\newcommand{\Yb}{\ensuremath{^{171}\mathrm{Yb}^+~}}
\newcommand{\Ba}{\ensuremath{^{138}\mathrm{Ba}^+~}}
\newcommand{\Ca}{\ensuremath{^{40}\mathrm{Ca}^+~}}

\newcommand{\avg}[1]{\ensuremath{\left\langle#1\right\rangle}}

\newtheorem{lemma}{Lemma}

\newtheorem{proposition}{Proposition}

\makeatother

\begin{document}
\title{Randomness expansion secured by quantum contextuality}

\author{Mark Um$^{1*}$, Qi Zhao$^{1*}$, Junhua Zhang$^{2,1}$, Pengfei Wang$^{1}$, Ye Wang$^{1}$, Mu Qiao$^{1}$, Hongyi Zhou$^{1}$, Xiongfeng Ma$^{1}$, and Kihwan Kim$^{1}$}

\affiliation{Center for Quantum Information, Institute for Interdisciplinary Information Sciences, Tsinghua University, Beijing 100084, P. R. China \\$^{2}$Shenzhen Institute for Quantum Science and Engineering, and Department of Physics, Southern University of Science and Technology, Shenzhen 518055, P. R. China\\$^{*}$These authors contributed equally to this work}

\begin{abstract}
The output randomness from a random number generator can be certified by observing the violation of quantum contextuality inequalities based on the Kochen-Specker theorem. Contextuality can be tested in a single quantum system, which significantly simplifies the experimental requirements to observe the violation comparing to the ones based on nonlocality tests. However, it is not yet resolved how to ensure compatibilities for sequential measurements that is required in contextuality tests. Here, we employ a modified Klyachko-Can-Binicio\u{g}lu-Shumovsky contextuality inequality, which can ease the strict compatibility requirement on measurements. On a trapped single \Ba ion system, we experimentally demonstrate violation of the contextuality inequality and realize self-testing quantum random number expansion by closing detection loopholes. We perform $1.29 \times 10^8$ trials of experiments and extract the randomness of $8.06 \times 10^5$ bits with a speed of 270 bits s$^{-1}$. Our demonstration paves the way for the practical high-speed spot-checking quantum random number expansion and other secure information processing applications.
\end{abstract}

\date{\today}
\maketitle

Randomness is a critical resource for information processing with applications ranging from computer simulations \cite{Coddington94} to cryptography \cite{Gisin02}. For cryptographic purposes, in particular, streams of random numbers should have good statistical behavior and unpredictability against adversaries \cite{Fiorentino07,Goldreich07}. In reality, random numbers produced by an algorithm or a classical chaotic process are intrinsically deterministic, thereby in principle allowing an adversary with the information of the device to find a pattern. On the other hand, the nature of quantum mechanics is fundamentally random, which, in this sense, provides a foundation for genuine randomness. Thanks to the unpredictable behavior of quantum mechanics, various quantum random number generators have been proposed and implemented \cite{Ma2016Quantum,Herrero2017Quantum,Liu2018High}. In practice, however, the security can be jeopardized if an adversary partially manipulates the devices or the devices are exposed to imperfection or malfunction. In order to address this realistic issue, the device-independent protocols have been proposed to guarantee the generated randomness without relying on detailed knowledge of uncharacterized devices \cite{Colbeck07,Pironio10,Colbeck2011private,Vazirani12,Pironio2013Security,Coudron13,Carl17,Chung2014Physical,Arnon2016,Ac2016Certified}.

The essence of device-independent randomness expansion lies in the fact that any violation of nonlocality inequalities \cite{bell1964einstein} shows unpredictability of measurement results. Recent security proofs show that randomness can be certified under the device-independent scenario by a class of Bell inequalities \cite{Colbeck07,Pironio10,Colbeck2011private,Vazirani12,Pironio2013Security,Coudron13,Carl17}. On the experimental side, the loophole-free violations of Bell's inequality have been demonstrated \cite{hensen2015Loophole,Shalm2015Strong,Giustina2015Significant}, which have been applied to generate random numbers \cite{bierhorst2018experimentally,liu2018device}. However, the randomness certification by the loophole-free Bell test is suffered from the low generation rate and requires high-fidelity entanglement sources. Moreover, it requires a large space separation between two detection sites to rule out the locality loophole, which is almost impossible to make the whole system compact. Till now, a strict and practical randomness expansion, where the output randomness is larger than input randomness, based on loophole-free Bell tests still has not been demonstrated and remained as an experimental challenge.

Similar to the Bell theorem, the Kochen-Specker theorem \cite{Bell66,KS67} states that quantum mechanics is contextual and cannot be fully explained by classical models, \emph{i.e.}, noncontextual hidden variables models that have definite predetermined values for measurement outcomes. Contextuality can be tested with a single system without entanglement by using the Klyachko-Can-Binicio\u{g}lu-Shumovsky (KCBS) inequality \cite{klyachko2008simple}, which can significantly reduce the experimental requirements comparing to the nonlocality test. Inequalities based on the Kochen-Specker theorem can provide alternatives for randomness certification, which has been studied in both theory and experiment \cite{Dongling12,UMark13,Carl17}. A contextuality test contains a set of contexts, which are composed of a certain number of compatible, $i.e.$, commuting in quantum mechanics, measurements. Note that the measurements in the nonlocality Bell test can also be regarded as compatible measurements. The randomness certification has been proven for the case with perfectly compatible measurements \cite{Carl17}. In reality, when the contextuality test is performed on a single party, it is difficult to establish the perfect compatibility between sequential measurements. Though a couple of experimental demonstration of randomness certification with the KCBS inequality have been reported \cite{Dongling12,UMark13}, the security of the scheme has not been fully resolved.

In this work, first, we experimentally demonstrate the violation of a modified KCBS inequality \cite{Gunhe10,Szangolies13}, which reveals quantum correlations without the requirement of the perfect compatibility on sequential measurements. Then we employ it for a spot-checking protocol of randomness expansion with exponential gain \cite{Carl17}, which is the first experimental demonstration of the strict randomness expansion. Our scheme is not a fully device-independent protocol, since it requires a few assumptions on the device, in particular, the assumption of approximate compatibilities on the measurement settings \cite{Herrero2017Quantum,lunghi2015self}. However, we do not need the perfect compatibility, since the imperfections in control and the disturbances from classical and quantum noisy-environment are characterized and compensated in the modified KCBS inequality. In this scenario, we can expand the randomness from the generated strings merely based on the experimental observed data that violate the modified KCBS inequality , which is in a self-testing manner \cite{Herrero2017Quantum,lunghi2015self}. We implement the protocol with a single trapped \Ba ion instead of a \Yb ion which was used for the previous demonstration \cite{UMark13} in order to fully address the experimental requirements in a modified KCBS inequality \cite{Gunhe10,Szangolies13}. The \Ba ion has long-lived states that can be used for the coherent shelving of a quantum state during the sequential measurements. We develop a narrow-line laser system that is stabilized to a high-finesse cavity to precisely manipulate the long-lived states and observe sufficient amount of violation for the randomness expansion with large enough number of trials. We perform $1.29 \times 10^8$ trials of experiments and extract the randomness of $8.06 \times 10^5$ bits with the speed of 270 bits s$^{-1}$. 

\section{Results}
\subsection*{Modified KCBS inequality}
In order to test contextuality, various inequalities have been proposed \cite{Cabello08,klyachko2008simple} and demonstrated in diverse physical systems, including trapped ion system \cite{Roos09,Xiang13}, photonic system \cite{Zeilinger11,xiao2018experimental}, and superconducting system \cite{Jerger16}. Among the contextuality inequalities, the KCBS inequality, which uses five observables $A_i$ taken $\pm 1$, shows that there is no hidden variables models in the smallest dimension $d=3$ \cite{klyachko2008simple},
\begin{equation} \label{eq:classicKCBS}
\begin{aligned}
\avg{ \chi_{KCBS}} =\avg{A_1 A_2} + \avg{A_3 A_2} + \avg{A_3 A_4} + \avg{A_5 A_4} + \avg{A_5 A_1} \geq -3.
\end{aligned}
\end{equation}
If all the five observables are predetermined, the inequality of Eq.~\eqref{eq:classicKCBS} always holds. In quantum mechanics, on the other hand, the inequality can be violated for a specific state with properly arranged observables $A_i$. In the case of $d=3$, denote the basis states by $\ket{1}$, $\ket{2}$ and $\ket{3}$. Design the observable $A_{i} = 1-2 \ket{v_{i}}\bra{v_{i}}$ to be the projector along the axis of $\ket{v_{i}}$. The maximal violation of the inequality (\ref{eq:classicKCBS}) is achieved when five state vectors, $\{\ket{v_{i}}\}$, form a regular pentagram, and the initial state vector passes through the center of the pentagram, as shown in Fig.~\ref{fig1:Pentagram}. In this case, the value of $\avg{ \chi_{KCBS}}$ achieves $5- 4 \sqrt{5} \approx -3.944$. The assumption behind the above contextuality inequality is that the observables $A_i$ and $A_{i+1}$ (let $A_6 \equiv A_1$) are compatible. However, in an actual experiment using sequential measurements, the compatibility is difficult to verify, which leads to open the compatibility loophole. The issues of the compatibility in sequential measurements have been addressed by modifying the KCBS inequality \cite{Gunhe10,Szangolies13} (see also Supplementary Materials (SM) I).

We combine the two modifications of the KCBS inequality to relax the condition of the perfect compatibility, which introduce additional terms of $\epsilon$'s \cite{Gunhe10} and $\avg{A_1 A_1}$ \cite{Szangolies13},
\begin{equation} \label{eq:finalKCBS}
\begin{aligned}
\avg{ \chi_{KCBS}} &= \avg{A_1 A_2} + \avg{A_3 A_2} + \avg{A_3 A_4} + \avg{A_5 A_4} + \avg{A_5 A_1} - \avg{A_1 A_1} \\
&\geq -4-(\epsilon_{12}+\epsilon_{32}+\epsilon_{34}+\epsilon_{54}+\epsilon_{51}+\epsilon_{11}).
\end{aligned}
\end{equation}
Here, $\langle A_i A_j \rangle$ denotes the expectation value of the measurement results in the time order of $A_i A_j$ for the sequential measurements. The terms of $\epsilon_{ij}$ describe the difference between a same pair of observables $A_i$ and $A_j$ in different time orders, $A_i A_j$ and $A_j A_i$, which can be regarded as the bound of incompatibility between these sequential measurements \cite{Gunhe10},
\begin{equation} \label{eq:epsilon}
\begin{aligned}
\epsilon_{ij}=\left|\avg{A_j|A_j A_i}-\avg{A_j|A_i A_j}\right|.
\end{aligned}
\end{equation}
The term of $\avg{A_1 A_1}$ is later introduced to address different types of incompatibility, which cannot be excluded with the terms of $\epsilon_{ij}$ \cite{Szangolies13}. In our work, we include both of the modifications that address all types of incompatibility discussed in the Refs \cite{Gunhe10,Szangolies13}. 

\begin{figure}[ht]
\includegraphics[width=0.9\columnwidth]{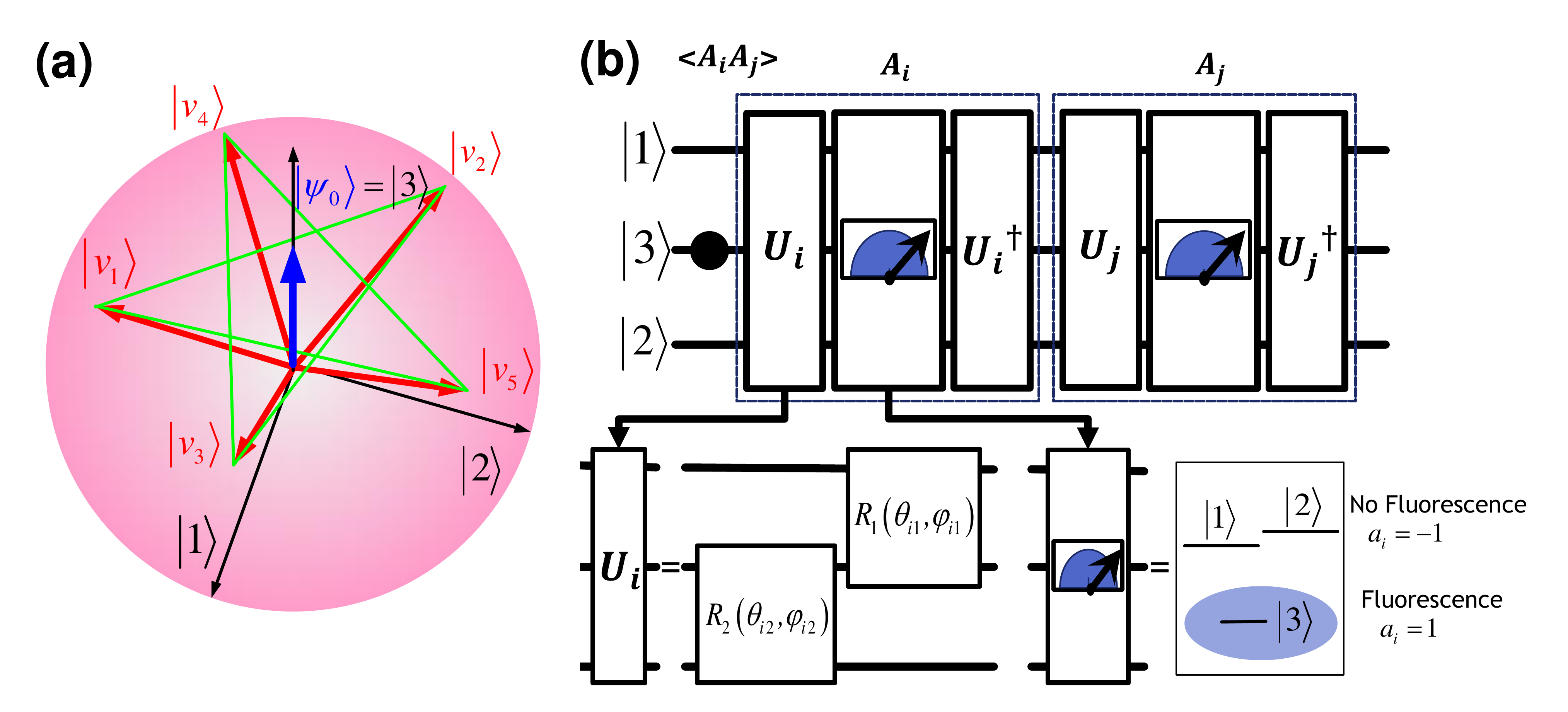}
\caption{KCBS pentagram and experimental procedure. (a) Initial state and five axes which form a pentagram in $d$=3 space. The five observables $A_1, A_2, \dots, A_5$ are the projectors on the axes respectively. The connected axes $\ket{v_i}$ and $\ket{v_{i+1}}$ are orthogonal, representing compatibility of the corresponding observables $A_i$ and $A_{i+1}$. (b) Initially, we prepare $\ket{3}$ state, then perform two sequential measurements of $A_i$ and $A_j$. Each sequential measurement contains a unitary rotation $U_i$, projective measurement, and an inverse unitary rotation $U_{i}^{\dagger}$. Each unitary rotation $U_i$ is comprised of first $R_2 \left(\theta_{2i}, \phi_{2i} \right)$ then $R_1 \left(\theta_{1i}, \phi_{1i} \right)$. In projective measurement, we assign $a_i=1(-1)$ if flourescence is (not) detected.
  }\label{fig1:Pentagram}
\end{figure}

\subsection*{Randomness expansion protocol}
The violation of the KCBS inequality implies the existence of quantum randomness which cannot be imitated by classical variables, which is not only fundamentally interesting but also posses the values for practical applications. The noncontextuality inequalities provide an alternative way of generating secure randomness. Similar to Bell inequality, in each trial, certain bits of randomness are consumed. Thus in order to efficiently expand the randomness from small input randomness, the idea of spot checking is necessary in our scheme. Recently, a robust (error-tolerant) randomness expansion scheme has been proposed \cite{Carl17}, which is a spot-checking protocol that achieves exponential expansion. The protocol is shown in Box \ref{box:scheme}, with our experimental settings.
\begin{figure}[ht]
\fbox{\parbox{16cm}{
\begin{flushleft}

\emph{\bf Denotation}
\begin{itemize}
\item
$G$ : KCBS game with 11 random inputs $\{\{1,2\}, \{2,1\}, \{2,3\}, \allowbreak \{3,2\}, \{3,4\}, \allowbreak \{4,3\}, \{4,5\}, \allowbreak \{5,4\}, \allowbreak  \{5,1\}, \allowbreak \{1,5\}, \allowbreak \{1,1\}\}$ for the game rounds, and the input $\{1,2\}$ is also for the generation rounds

\item
$D$: a quantum device compatible with $G$

\item
Output length $N$: $N_{exp}=1.29 \times 10^8$ in experiment

\item
Test probability $q\in(0,1)$: $q_{exp}=10^{-4}$ in experiment

\item
Score threshold $\chi_g\in(0,1)$: $\chi_g=2/3$ in this KCBS game
\end{itemize}

\emph{\bf Protocol} $R_{gen}$
\begin{enumerate}
\item
Choose a bit $t \in \{0,1\}$ according to the Binomial distribution ($1-q$, $q$).

\item
If $t = 1$ (``game round"), the game $G$ is played with $D$ and the output is recorded. Outputs of game rounds are additionally collected for checking.

\item
If $t = 0$ (``generation round"), $\{1,2\}$ is given to $D$ and the output is recorded.

\item
Steps 1-3 are repeated $N$ times.

\item
Calculate the score $g_{KCBS}$ from all game round outputs. If $g_{KCBS} < \chi_g$, then abort. Otherwise, move to to randomness extraction.
\end{enumerate}
\end{flushleft}
}\label{box:scheme}
}
\caption{The main spot-checking protocol and related denotation.}
\end{figure}

According to the definition of Ref. \cite{Carl17}, the score of the KCBS game is given by $g \in \left\{0,1\right\}$. Thus, Eq.~\eqref{eq:finalKCBS} can be rewritten in the form KCBS game $G$,
\begin{equation}
\begin{aligned}
g_{KCBS} = -\frac{1}{6} (\avg{A_1 A_2} + \avg{A_3 A_2} + \avg{A_3 A_4} + \avg{A_5 A_4} + \avg{A_5 A_1} - \avg{A_1 A_1} \\+\epsilon_{12}+\epsilon_{32}+\epsilon_{34}+\epsilon_{54}+\epsilon_{51}+\epsilon_{11}).
\label{eq:KCBSgame}
\end{aligned}
\end{equation}
The classical winning probability is $\chi_g=2/3$ (see SM.II. for details) and the achievable maximal quantum winning probability is $\chi'_g=(4 \sqrt{5}-4)/6 \approx 0.824$. The gap between $\chi_g$ and $\chi'_g$ enables randomness expansion.

In our scheme, the amount of randomness quantified by the min-entropy is related to the violation of the KCBS inequality (see Methods, Randomness generation rate). For a given game, if the device obtains a superclassical average score, then it must  exhibit certain quantumness, which implies random behavior. This quantum randomness produced by the devices could be extracted. The violation is only based on the observation of experimental data, and can be independent of the sources of prepared states and other device specifications. Therefore, our protocol is self-testing provided that the following assumptions. In our scheme, there are three underlying main assumptions: (1) the input is chosen from an independent random distribution uncorrelated with the system; (2)
the measurement outcomes cannot be leaked directly to adversaries; (3) The first and the second measurements in a context are approximately compatible and can be characterized by $\epsilon_{ij}$ and $\avg{A_1 A_1}$ in Eq. (\ref{eq:finalKCBS}). The assumptions (1) and (2) are widely used in other self-testing tasks, such as device-independent quantum random number generators \cite{Vazirani12,Carl17,Arnon2016}. The assumption (3) is related to the validity of the quantum contextuality test, which would be similar to all the other experimental tests with sequential measurements. We note that we do not require the perfect compatibility. Instead, we assume approximate compatibility, which can be quantified by the terms of $\epsilon_{ij}$ and $\avg{A_1 A_1}$ in Eq. (\ref{eq:finalKCBS}). 
Due to those terms, the violation of the inequality of Eq. (\ref{eq:finalKCBS}) is getting difficult if two sequential measurements are deviated from the perfect compatibility. However, in our scheme, two measurements in a context are performed on a single system, which makes it impossible to exclude the possibility that a malicious manufacturer sabotage the compatibility assumption by registering the setting and results of the first measurements and using them for the second measurements. Therefore, our protocol can not be viewed as a fully-device independent scenario. We need the trust of the device that the measurement settings are close enough to be compatible, but it is fine to have imperfections in the realization and disturbance from classical or quantum noisy environments since the amount of introduced incompatibilities are quantified. Our protocol is well fitted to a scenario of trusted but error-susceptible devices. Given these assumptions, the generated randomness is certified by only experimental statistics.


\subsection*{$^{138}\mathrm{\textbf{Ba}}^+$ qutrit and experimental procedure}
There have been demonstrated the randomness expansion based on the experimental violations of the KCBS inequality using a single trapped \Yb ion \cite{UMark13}. In the demonstration, however, it is not possible to test the modified KCBS inequality, Eq.~\eqref{eq:finalKCBS}, due to lack of capability in obtaining all correlations. For example, when we observe fluorescence in the first measurement, the second measurement does not provide any useful information \cite{UMark13}. Instead, we develop a single \Ba ion system \cite{Dietrich2010,Slodicka2012} with which we can obtain full-correlation results from the sequential measurements by using long-lived shelving states in $^{5}D_{5/2}$ manifold similar to \Ca ion \cite{Leupold2018}. We choose two Zeeman sub-levels ($\ket{m_j =+1/2}\equiv \ket{1}$, $\ket{m_j =+3/2}\equiv \ket{2}$) in the $^{5}D_{5/2}$ manifold, and one Zeeman sub-level ($\ket{m_j =+1/2}\equiv \ket{3}$) in the $^{6}S_{1/2}$ manifold to represent the qutrit system as shown Fig.~\ref{fig3:ExperimentalSetup}(a). In the projective measurement, we observe fluorescence when the state is projected to $\ket{3}$ and no fluorescence for all the other projections on the subspace that consists of $\ket{1}$ and $\ket{2}$ basis while conserving coherence. Different from the \Yb ion realization, since the coherence is not destroyed even when we observe fluorescence in the first measurement, we can get meaningful outcomes in the second measurement. The transitions between $^{6}S_{1/2}$ and $^{5}D_{5/2}$ are coherently manipulated by a narrow-line laser with the wavelength of 1762 nm, which is stabilized to a high-finesse optical cavity. The coherent rotations $R_1 \left(\theta_1, \phi_1 \right)$ between $\ket{1}$ to $\ket{3}$ and $R_2 \left(\theta_2, \phi_2 \right)$ between $\ket{2}$ to $\ket{3}$ (See Methods for the details) are realized by applying the 1762 nm laser beam, where $\theta$ and $\phi$ are controlled by the duration and the phase of the laser beam, respectively, using an AOM.

The procedure of the experimental test of the KCBS inequality consists of Doppler and electromagnetically induced transparency (EIT) cooling \cite{Morigi00,Lin13Sympathetic,Lechner16}, initialization, the first projective measurement of observable $A_i$ and the second projective measurement of $A_j$. The initialization to the state $\ket{3}$ is performed by applying the optical pumping beam of 493 nm with $\sigma^{+}$ polarization shown in Fig. \ref{fig3:ExperimentalSetup}(b). The first measurement of the observable $A_i$ is realized by the rotation $U_i$, the projective measurement, and the reverse of the rotation $U_i^{\dagger}$ (see Methods). The $U_i$ maps the axis $\ket{v_i}$ to the axis $\ket{3}$ and the projective measurement can be described as the projector $M_{\ket{3}}=2\ket{3}\bra{3}-1$ (see Methods). Thus $A_i$ is assigned to value $a_i =1$ when fluorescence is observed and $a_i =-1$ when no fluorescence is observed. The projective measurement consists of the state-dependent fluorescence detection and the optical pumping sequence (see Methods). The second measurement of the observable $A_j$ is realized by the same scheme to that of the first measurement. Unitary rotations of $A_i$(Alice) and $A_j$(Bob) are realized by different signal generators and amplifiers, their results are also collected independently.

\begin{figure}[ht]
\includegraphics[width=0.9\columnwidth]{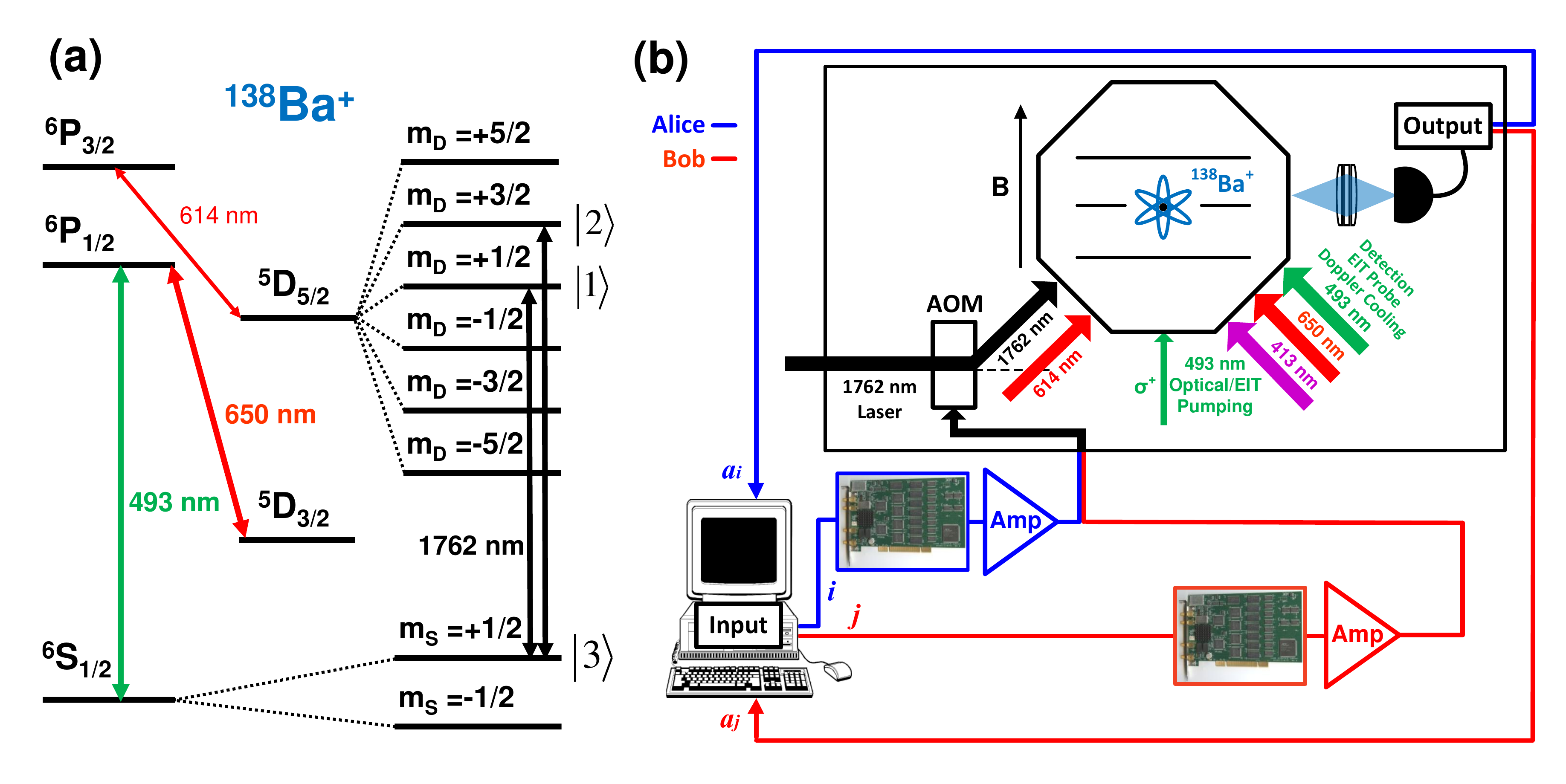}\\
\caption{Experimental setup of the \Ba ion system. (a) The energy level diagram of a \Ba ion for a qutrit system, which is represented by two Zeeman sublevels $\ket{m_D =+1/2}\equiv\mathrm{\ket{1}}$, $\ket{m_D =+3/2}\equiv\mathrm{\ket{2}}$ in the $^{5}D_{5/2}$ manifold, and $\ket{m_S =+1/2}\equiv\mathrm{\ket{3}}$ sublevel in the $^{6}S_{1/2}$ manifold. The quadrupole transitions between $^{6}S_{1/2}$ and $^{5}D_{5/2}$ are coherently manipulated using narrow-line 1762 nm laser which is stabilized to a high-finesse cavity. The 493 nm and 650 nm lasers are used for Doppler cooling, EIT cooling, optical pumping and detection. The 614 nm laser is used for depopulation of $^{5}D_{5/2}$ level to $^{6}S_{1/2}$ level. (b) The experimental setup of a trapped \Ba ion for testing KCBS inequality and for the spot checking random number expansion. One of 11 measurement configurations $\left\{A_i,A_j\right\}$ is randomly selected. When Alice and Bob receive $i$ and $j$, they could not know the setting of the other since each observable is included in at least two different contexts. For example, when Alice receives $i=3$, Bob could be either $j=2$ or $j=4$. Their pulse sequences are independently generated by their own Direct Digital Synthesizer (DDS) and amplifiers, sent to the acousto-optic modulator (AOM) through independent paths, and finally applied to the ion on different time order. Fluorescence is observed by PMT on different time order and the values of the observables are assigned accordingly.
}\label{fig3:ExperimentalSetup}
\end{figure}

\subsection*{Violation of KCBS inequality and randomness expansion}
To test the modified KCBS inequality \eqref{eq:finalKCBS}, we need to measure the eleven combinations of sequential measurements, which include five terms explicitly shown in the inequality \eqref{eq:finalKCBS} as $\avg{A_1 A_2}$, $\avg{A_3 A_2}$, $\avg{A_3 A_4}$, $\avg{A_5 A_4}$, and $\avg{A_5 A_1}$, the other five terms with reverse order ($\avg{A_2 A_1}$, $\avg{A_2 A_3}$, $\avg{A_4 A_3}$, $\avg{A_4 A_5}$, $\avg{A_1 A_5}$), and $\avg{A_1 A_1}$. The reversed-order terms are necessary to observe $\epsilon_{12}$, $\epsilon_{32}$, $\epsilon_{34}$, $\epsilon_{54}$, and $\epsilon_{51}$ and evaluate incompatibility from experimental imperfections. The detailed experimental results of the measurements are summarized in Table \ref{TAB2:KCBSData}.

\begin{table}[htbp]
\caption{
Experimental results for different observables and compatibility terms for the KCBS inequality \eqref{eq:finalKCBS}. Total game rounds are $1.2 \times 10^4$. The standard deviations of the final result are 0.015 and 0.023 for the single observables and correlations, respectively, $10^{-3}$ order for the compatibility terms, all as shown in the parenthesis. The standard deviation for the violation $\sigma$ is 0.068 and our experimental data shows the violation of the extended inequality \eqref{eq:finalKCBS} with 11 $\sigma$.
}
\begin{centering}
\centering
\begin{tabular}[t]
{c|c|c|c|c}
\hline
$\left\{i,j\right\}$ & $\avg{A_i A_j}$ & $\avg{A_i}$ & $\avg{A_j}$ & $\epsilon_{ij}$ \\
\hline
$\left\{\textbf{1,2}\right\}$ & \textbf{-0.768(23)} & 0.082(15) & 0.091(15) & \textbf{0.005(2)} \\
\hline
$\left\{2,1\right\}$ & -0.783(23) & 0.096(15) & 0.065(15) & 0.017(4) \\
\hline
$\left\{2,3\right\}$ & -0.767(22) & 0.098(14) & 0.088(14) & 0.033(5) \\
\hline
$\left\{\textbf{3,2}\right\}$ & \textbf{-0.750(23)} & 0.107(15) & 0.098(15) & \textbf{0.009(3)} \\
\hline
$\left\{\textbf{3,4}\right\}$ & \textbf{-0.773(23)} & 0.084(15) & 0.082(15) & \textbf{0.019(4)} \\
\hline
$\left\{4,3\right\}$ & -0.762(22) & 0.122(14) & 0.068(14) & 0.000(0) \\
\hline
$\left\{4,5\right\}$ & -0.782(23) & 0.095(15) & 0.075(15) & 0.014(3) \\
\hline
$\left\{\textbf{5,4}\right\}$ & \textbf{-0.789(22)} & 0.056(15) & 0.094(15) & \textbf{0.025(4)} \\
\hline
$\left\{\textbf{5,1}\right\}$ & \textbf{-0.773(22)} & 0.100(14) & 0.069(14) & \textbf{0.000(0)} \\
\hline
$\left\{1,5\right\}$ & -0.767(23) & 0.109(15) & 0.066(15) & 0.007(2) \\
\hline
$\left\{\textbf{1,1}\right\}$ & \textbf{0.977(21)} & 0.106(15) & 0.108(15) & \textbf{0.001(1)} \\
\hline
\multicolumn{5}{c}{ $g_{KCBS} = 4.772(68)/6 = 0.795(11)$ }\\
\hline
\end{tabular}
\par\end{centering}\label{TAB2:KCBSData}
\end{table}

For the spot-checking protocol, we choose $\{A_1, A_2\}$ as the setting for generation rounds, i.e., $\left\{1,2\right\}$ as the distinguished input of our KCBS game $G$. At each round, a string of trusted random bits $t$ decides each round is generation round or game round. If it is generation round, we perform the sequential measurement $\{A_1, A_2\}$ and record the output $\{a_1, a_2\}$. If it is game round, we randomly choose one of the 11 configurations of Eq.~\eqref{eq:finalKCBS} and save the result $\{a_i, a_j\}$ after performing the sequential measurement $\{A_i, A_j\}$.

From the Eq.~\eqref{eq:piandO}, we can see that when the violation is small, the total rounds $N$ is a critical parameter. A positive generation rate requires a sufficiently large $N$. Thus we give the minimum required rounds for different violations, which is instructive for experiments.
Figure \ref{fig4:ViolationvsEntropy}(a) shows the minimum total rounds $N_{min}$ to obtain net randomness depending on the KCBS game score $g_{KCBS}$, where $N_{min}$ can be obtained with an optimal $q$. In order to gain net randomness at our experimentally observed $g_{KCBS}=0.795$, we perform $N_{exp}=1.29\times 10^8$ rounds, which is sufficiently larger than $N_{min}=4.6\times 10^7$. At our experimental condition of $N_{exp}$, Fig. \ref{fig4:ViolationvsEntropy}(b) shows the generation rate of net randomness depending on $g_{KCBS}$. If $g_{KCBS}\le0.77$, we can not observe net randomness regardless of $q$. When $g_{KCBS}>0.77$, there exist optimal $q$ values. If $q$ is bigger than proper range, input randomness increases thus no net randomness is produced. If $q$ is smaller than proper range, due to the increase of $\Delta$ in Eq.~\eqref{eq:Rgen}, we also cannot gain net randomness. In our experiment, we choose $q_{exp}=10^{-4}$ as shown in red circle of Fig.~\ref{fig4:ViolationvsEntropy}(b).

\begin{figure}[ht]
\includegraphics[width=0.9\columnwidth]{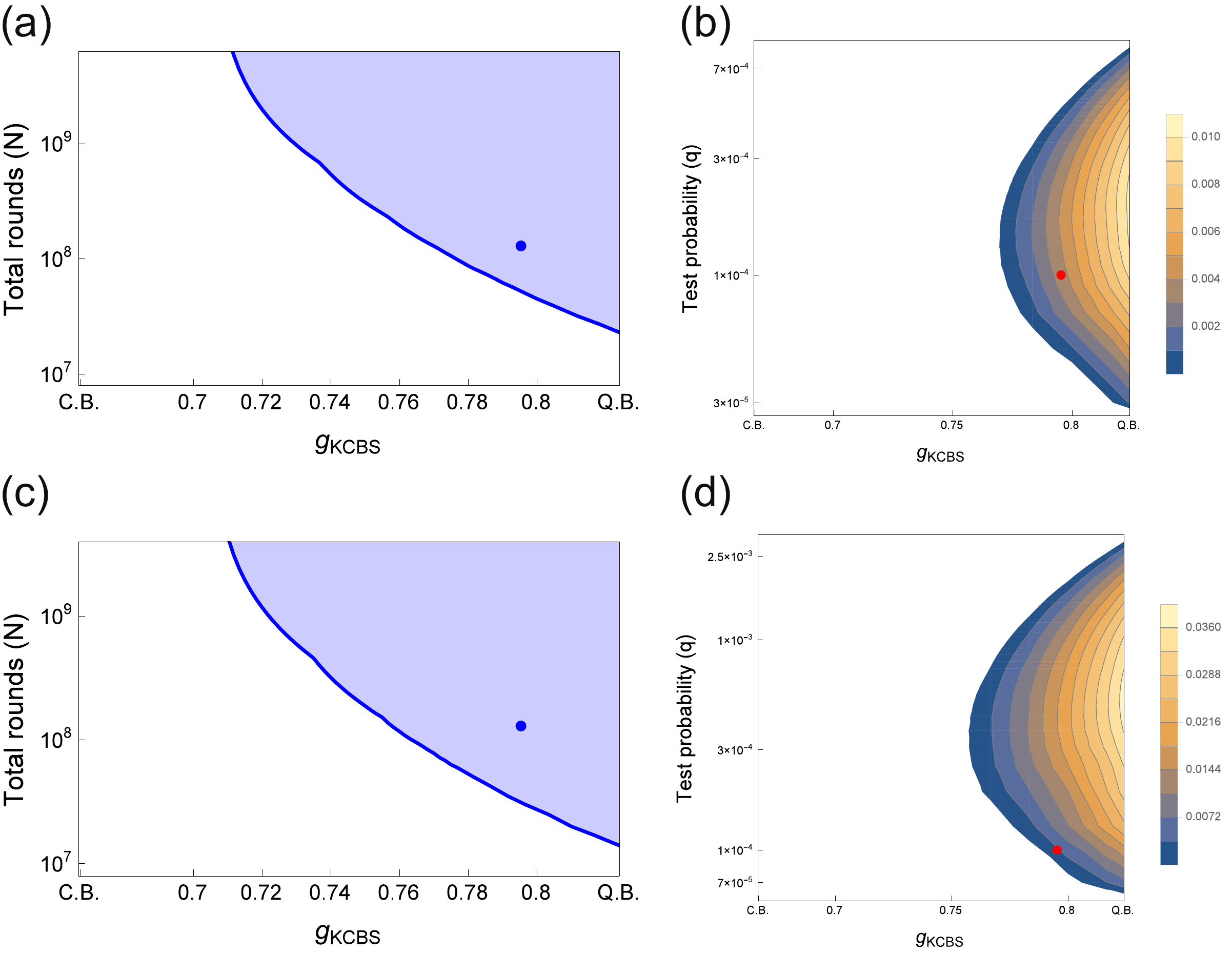}
\caption{(a-b) For MS-bound the relation of the score of KCBS game $g_{KCBS}$, number of total rounds $N$, test probability $q$, and randomness expansion rate with smoothing parameter $\delta=10^{-2}$ in Eq.~\eqref{eq:smoothEntropy}. (a) The minimum number of rounds to have net randomness depending on the score $g_{KCBS}$. The minimum $N$ decreases as $g_{KCBS}$ increases. We can get net randomness only within the shadow area. Our experimental $g_{KCBS}=0.795$ and $N_{exp}=1.29 \times 10^8$ are shown as the green circle. (b) Randomness expansion rate at different $g_{KCBS}$ and $q$ for our $N_{exp}$. Only with the combination of large enough $g_{KCBS}$ and proper $q$ can we obtain net randomness. Our experimental $g_{KCBS}=0.795$ and $q_{exp}=0.0001$ are shown as the red circle, resulting expansion rate $3.4 \times 10^{-3}$ per bit. (c-d) For HS-bound the relation of the score of KCBS game $g_{KCBS}$, number of total rounds $N$, test probability $q$, and randomness expansion rate with smoothing parameter $\delta=10^{-4}$ in Eq.~\eqref{eq:smoothEntropy}. (c) The minimum number of rounds to have net randomness depending on the score $g_{KCBS}$. Our experimental condition is shown as the green circle. (d) Randomness expansion rate at different $g_{KCBS}$ and $q$ for our $N_{exp}$. Our experimental $g_{KCBS}=0.795$ and $q_{exp}=0.0001$ are shown as the red circle, resulting expansion rate $4.4 \times 10^{-3}$ per bit, although our $q_{exp}$ is not optimal for this case.
  }\label{fig4:ViolationvsEntropy}
\end{figure}

Meanwhile, we also apply HS bound to our experimental data as shown in Fig. \ref{fig4:ViolationvsEntropy}. The HS bound produces a bigger generation rate than the MS bound, thus we are able to reduce smoothing parameter $\delta$ to $10^{-4}$, which is the security failure probability. We find that the optimal $q$ for the HS bound is different from that of the MS bound, but our $q_{exp}$ is still good enough to generate net randomness as shown in Fig. \ref{fig4:ViolationvsEntropy}(d).

We play $N_{exp}=1.29 \times 10^8$ (129421072) rounds and observe the left hand side of the inequality Eq.~\eqref{eq:finalKCBS}, $\avg{\chi_{KCBS}} = -4.831$, and the right hand side $-4-(\epsilon_{12}+\epsilon_{32}+\epsilon_{34}+\epsilon_{54}+\epsilon_{51}+\epsilon_{11})=-4.058$.  The detailed experimental results of are summarized in Tab.~\ref{TAB2:KCBSData}. The obtained final score of KCBS game is $g_{KCBS} = 4.772(68)/6 = 0.795(11)$, which violates the inequality \eqref{eq:finalKCBS} by 11 standard deviations. Our test probability is $q_{exp}=10^{-4} \sim O((\mathrm{log}^3 N_{exp})/N_{exp})$, and the required amount of initial random seed is $O(\mathrm{log}^4 N_{exp})$ bits (see SM.III. and IV. for details). The min-entropy of final randomness is $5.3\times 10^{-3}$ per bit, thus the output random bits is $\Theta(N_{exp})$, achieving exponential randomness expansion. In real number, we get $6.88 \times 10^5$ bits of min-entropy which exceeds $2.35 \times 10^5$ bits of input randomness, resulting $4.52 \times 10^5$ net random bits, expansion rate per round is $3.5 \times 10^{-3}$.

When we apply the HS bound to the experimental data, we get larger min-entropy and expansion rate. Note that $\delta$ is two order smaller than that of the MS bound. The min-entropy of final randomness is $6.2 \times 10^{-3}$ per bit, and the expansion rate per round is $4.4 \times 10^{-3}$. We get $8.06 \times 10^5$ bits of min-entropy which exceeds $2.35 \times 10^5$ bits of input randomness, resulting $5.71 \times 10^5$ net random bits. If we use an optimized $q$ based on the calculation using the MS bound, we can get even larger min-entropy and expansion rate.

\section*{Discussion and Outlook}
In this work, we achieve an exponential randomness expansion secured by quantum contextuality. Regardless of imperfections and experimental noises, the observed violation of the modified KCBS inequality, Eq.~\eqref{eq:finalKCBS}, verifies the generated randomness. In our protocol, we can guarantee the randomness without the i.i.d.~assumption even when imperfections or noises may originate from quantum mechanics, which would be our quantum adversary. Note that there are other types of quantum contextuality inequalities that do not require sequential measurements, which could also ensure the no-disturbance condition. Our work can be easily extended to these proposals as well.

Due to the advantage of using contextuality for randomness certification, our current generation speed is 270 bits s$^{-1}$ and 1.7 bits s$^{-1}$ after applying Toeplitz matrix hashing, which is faster than that of using Bell's inequality \cite{Pironio10,bierhorst2018experimentally}. We believe we can achieve orders of magnitude higher generation speed by several improvements in duration of cooling, optical pumping, and detection, coherence time of qutrit, and coherent operation time (see SM.V. for details). From the theoretical aspect, though the generation rate used in our scheme is robust and noise-tolerable, a large number of trials are still required which costs a lot of efforts. An improved generation rate based on general contextuality inequality is still an open problem. Recently, entropy accumulation theory has been applied in device-independent protocols \cite{dupuis2016entropy,arnon2018practical} and may be a potential tool for achieving a near optimal generation rate using contextuality inequality.

Fully device-independent random number generation puts a very high requirement on implementation devices. In practice, it is meaningful to pursue alternative randomness generation schemes with additional reasonable assumptions, such as Bell test with certain loopholes \cite{Liu2018High}, uncertainty principles, or contextuality \cite{kulikov2017realization}. Our scheme is not fully device-independent due to the approximate compatibility assumption on measurements. On the other hand, our scheme does enjoy the self-testing properties on both source and measurement. Note that the self-testing protocols with proper assumptions on the device have also been proposed to deal with other quantum information processing tasks \cite{lunghi2015self,fiorentino2007secure}.

The security proof in \cite{Carl17} only considers the perfect case without imperfections of compatible or no-disturbance. Here we characterize this imperfections and modify the score of KCBS game. 
We assume the imperfections in experiments does not affect the adversary and security proof in \cite{Carl17} and only leads to a modified classical bound. The rigorous proof of a self-testing random number generator with limited compatibility is an interesting open problem and we will leave it as a future theoretical work.

Moreover, quantum contextuality can also provide an alternative means for randomness amplification. In principle, we can individually manipulate multiple ions and use them to generate random numbers simultaneously, which could lead to orders of magnitude faster generation speed. Such kind of multiple ion system can be applied to realize randomness amplification protocol \cite{Chung2014Physical}, which generates true randomness out of weak randomness input. The protocol can be implemented by the multiple of our developed randomness expansion systems and the exclusive-OR of their outputs.

\subsection*{Data availability}
The authors declare that the main data supporting the finding of this study are available within the article and its Supplementary Material files. Additional data can be provided by the corresponding author upon request.

\section*{Methods}
\subsection*{Randomness generation rate}
Here, we consider the case that the average probability
of measurement setting choice is unbiased, $p(a) = 1/11$, $a\in \{(i,i+1),(i+1,i),(1,1)\} (i=1,2,\dots,5)$.
The violation of the inequality in Eq.~\eqref{eq:finalKCBS}, indicates the presence of genuine quantum randomness in the measurement outcomes. The amount of secure randomness can be quantified by the smooth min-entropy $H_{min}^\delta(X|AE)$, which is bounded by
\begin{equation} \label{eq:smoothEntropy}
\begin{aligned}
H_{min}^\delta(X|AE)\ge N R_{gen} (g_{KCBS}, q,\epsilon,N, \delta),
\end{aligned}
\end{equation}
where $X$ and $A$ denote the output and input sequences, respectively; $E$ denotes the system of an quantum adversary; $\delta$ is the smoothing parameter representing the security failure probability; $g_{KCBS}$ is the KCBS game score; $N$ is the total number of experiment trials; $q$ is the probability of choosing game round; $\epsilon$ is the parameter of Schatten norm, in the security analysis, $(1+\epsilon)$-Schatten norm is applied; $R_{gen}$ is the lower bound of randomness generation on average for each trial. In order to achieve the maximal randomness expansion, we also need to consider the input randomness for each trial,
\begin{equation} \label{eq:inputrandomness}
\begin{aligned}
R_{In}=q\log11+H(q),
\end{aligned}
\end{equation}
and the randomness expansion rate can be expressed as $R_{exp}=R_{gen}-R_{In}$. The output randomness rate $R_{gen}$ is given by
\begin{equation}
\begin{aligned}
R_{gen}=\pi(\chi)-\Delta,
\label{eq:Rgen}
\end{aligned}
\end{equation}
where
\begin{equation} \label{eq:piandO}
\begin{aligned}
&\chi=g_{KCBS}-\chi_g, \\
&\pi(\chi)=2\frac{\log(e)\chi^2}{r-1}, \\
&\Delta=\frac{\epsilon}{q} \frac{8\log(e)\chi^2}{(r-1)^2} +
\frac{\log(2/\delta^2)}{N\epsilon} + 2rq + O\left(\left(\frac{\epsilon}{q}\right)^2\right).
\end{aligned}
\end{equation}
Here, all the log is base 2 throughout the paper, $r$ is the output alphabet size, which is $r=4$ in our KCBS game. The explicit form of $O\left(\left(\frac{\epsilon}{q}\right)^2\right)$ and derivation of Eq.~\eqref{eq:piandO} are shown in Sections III and IV of Supplementary Materials. Denote the above bound as Miller-Shi (MS) bound \cite{Carl17} and afterwards a tighter bound is obtained, referred as Huang-Shi (HS) bound without the dependence of $r$ \cite{Huang17}. For the experiment, we perform the parameter optimization of $q$ and $\epsilon$ to achieve the maximal randomness expansion rate $R_{exp}$ with MS bound and also show the final randomness rate for two different bounds.

\subsection*{Unitary Rotations}
Here, $R_1 \left(\theta_1, \phi_1 \right)$ and $R_2 \left(\theta_2, \phi_2 \right)$ are defined as
\begin{equation*}
R_1 \left(\theta_1, \phi_1 \right) = \left(%
\begin{array}{ccc}
\text{$\cos$}\frac{\theta_1}{2} & 0 & -ie^{i\left(\phi_1+\frac{\text{$\pi$}}{%
2}\right)}\text{$\sin$}\frac{\theta_1}{2} \\
0 & 1 & 0 \\
-ie^{-i\left(\phi_1+\frac{\text{$\pi$}}{2}\right)}\text{$\sin$}\frac{\theta_1%
}{2} & 0 & \text{$\cos$}\frac{\theta_1}{2}%
\end{array}%
\right),
\end{equation*}

\begin{equation*}
R_2 \left(\theta_2, \phi_2 \right) = \left(%
\begin{array}{ccc}
1 & 0 & 0 \\
0 & \text{$\cos$}\frac{\theta_2}{2} & -ie^{-i\left(\phi_2+\frac{\text{$\pi$}%
}{2}\right)}\text{$\sin$}\frac{\theta_2}{2} \\
0 & -ie^{i\left(\phi_2+\frac{\text{$\pi$}}{2}\right)}\text{$\sin$}\frac{%
\theta_2}{2} & \text{$\cos$}\frac{\theta_2}{2}%
\end{array}%
\right).
\end{equation*}

The Unitary rotations $U_i$ in the measurement configurations shown in Fig. \ref{fig1:Pentagram}(b) are realized by corresponding $R_2 \left(\theta_{2i}, \phi_{2i} \right)$ then $R_1 \left(\theta_{1i}, \phi_{1i} \right)$, while $U_{i}^{\dagger}$ are composed of $R_1 \left(\theta_{1i}, \pi-\phi_{1i} \right)$ then $R_2 \left(\theta_{2i}, \pi-\phi_{2i} \right)$, where the specific $U_i$ are listed in Tab. \ref{TAB1:Rotation}.

\begin{table}[htbp]
\caption{
Unitary rotations $U_i$.
}
\begin{centering}
\centering
\begin{tabular}[t]
{c|c}
U & Rotation \\
\hline
$U_1$ & $R_1(0.531\pi,\pi)\cdot R_2(0.066\pi,0)$ \\
$U_2$ & $R_1(0.442\pi,0)\cdot R_2(0.328\pi,0)$ \\
$U_3$ & $R_1(0.191\pi,\pi)\cdot R_2(0.506\pi,\pi)$ \\
$U_4$ & $R_1(0.104\pi,\pi)\cdot R_2(0.526\pi,0)$ \\
$U_5$ & $R_1(0.377\pi,0)\cdot R_2(0.404\pi,\pi)$ \\

\end{tabular}
\par\end{centering}\label{TAB1:Rotation}
\end{table}

\subsection*{Experimental sequence}
Each round comprises Doppler cooling, EIT cooling, optical pumping, rotation ($U_i$), the first projective measurement, inverse rotation ($U_{i}^{\dagger}$), rotation ($U_j$), the second projective measurement, inverse rotation ($U_{j}^{\dagger}$). The \Ba ion is first cooled down with 500 $\mu$s Doppler cooling and 1000 $\mu$s EIT cooling. Optical pumping procedure initializes the internal state of the ion to $\ket{m_S=+1/2}$ by carefully adjusting the polarization of 493 nm laser beam. We manipulate the states between $\ket{1}$ and $\ket{3}$, and between $\ket{2}$ and $\ket{3}$ by applying 1762 nm laser with different frequencies and amplitudes controlled by AOM. The 1762 nm fiber laser is stabilized with a high-finesse cavity to achieve a linewidth below 1 Hz using Pound-Drever-Hall technique. The cavity is made of ultra-low-expansion material and is mounted in a vacuum cavity with active temperature stabilization to maximize the stability of its length. Frequency and amplitude of RF signal for AOM inputs are generated by two independent pairs of DDS (AD9910) for $A_i$ and $A_j$ measurements, which represent Alice and Bob, ensuring they are compatible without communication. The 2$\pi$ time for both Rabi oscillations are adjusted to 37 $\mu$s, that is $\Omega=\left(2\pi\right)27$ kHz. Every rotation $U_i$ is performed with same duration of no longer than 16 $\mu$s.

EIT cooling implements the asymmetry profile of the absorption spectrum to cancel the heating effect caused by carrier transition meanwhile strength the red-sideband transition to hold the cooling function \cite{Morigi00,Lin13Sympathetic,Lechner16}. EIT cooling only need three level, however there are four Zeeman states of \Ba ion. Though with only doppler cooling and EIT cooling the ion is not perfectly cooled to the ground state without sideband cooling (average phonon number $\avg{\bar{n}=0.1}$), the carrier transition operated by stabilized 1762 nm laser has enough fidelity due to the small Lamb-Dicke parameter $\eta=0.07$.

Our projective measurement includes state discrimination and state re-preparation. We differentiate one state versus the other two states of a qutrit using the standard fluorescent-detection method. For the $\ket{3}$ state, average of 32 photons at 493 nm can be detected during 600 $\mu$s and no photons for the $\ket{1}$ or the $\ket{2}$ state. In experiment, perfect state detection fidelity is achieved for $\ket{3}$, while the error of $\ket{1}$ and $\ket{2}$ is 1.3\%. Duration of the first projective measurement is set to 600 $\mu$s with discrimination $n_{\mathrm{ph}}=3$ while the second projective measurement is 300 $\mu$s and $n_{\mathrm{ph}}=1$. Fluorescence detection duration is longer than the coherence time between $\ket{1}$ and $\ket{2}$, which is around 200 $\mu$s. Therefore we add spin echo pulses during the fluorescence detection to keep the coherence until the second measurement is done. Re-preparation to $\ket{3}$ state, which is realized by optical pumping without 614 nm laser, keeps the coherence between $\ket{1}$ and $\ket{2}$ in $^{5}D_{5/2}$ manifold. Since the second projective measurement is the end of the experiment without further operations, we do not apply spin echo pulses and state re-preparation, which results in shorter duration.

\subsection*{Extractor and random test} A random number extractor is a hashing function transforming a non-perfect random number string $\{0,1\}^N$ to a nearly perfect one $\{0,1\}^m$. In our experiment, the length of the input string is $N_{exp}=1.29\times 10^8$ and $H_{min}(X|IE)=6.2\times 10^{-3}$ per bit. According to leftover hash lemma \cite{impagliazzo1989pseudo}
\begin{equation}
m \leq NH_{min}(X|IE)-2\log{\frac{1}{\epsilon_h}},
\end{equation}
we set the security parameter $\epsilon_h$ to be a typical value $\epsilon_h=2^{-100}$, and the length of the output string is $m=8.06 \times 10^5$.
Here we apply a random $m \times N_{exp}$ Toeplitz matrix \cite{wegman1981new} as the hashing function. The input random seed $\{0,1\}^s$ ($s=m+N_{exp}-1$) is from \cite{nie2015generation}.

We apply the random test \cite{Rukhin10} to the extracted data. The tests include 'Frequency', 'Block Frequency (BFreq)', two 'Cumulative Sums (CuSm)' tests, 'Runs', 'Longest-Run-of-Ones in a Block (LROB)', 'Rank', 'Fast Fourier Transform (FFT)', 'Serial'. The $p$-values are distributed in the interval (0, 1), which show the probabilities that an ideal random number generator would produce less random sequence than the tested one. If $p$-value is taken 0, it means the tested data is fully non-random, while 1 means completely random. The threshold we set for accepting the data as random is 0.01. As shown in Fig. \ref{fig5:RandomTest}, the outputs strings ${a_i}^N$ and ${a_j}^N$ pass all tests. However, as expected, the combined outputs ${(a_i a_j)}^N$ do not pass all tests because since the measurement outputs of two observables are correlated thus are not independent random variables.

\begin{figure}[ht]
\includegraphics[width=0.9\columnwidth]{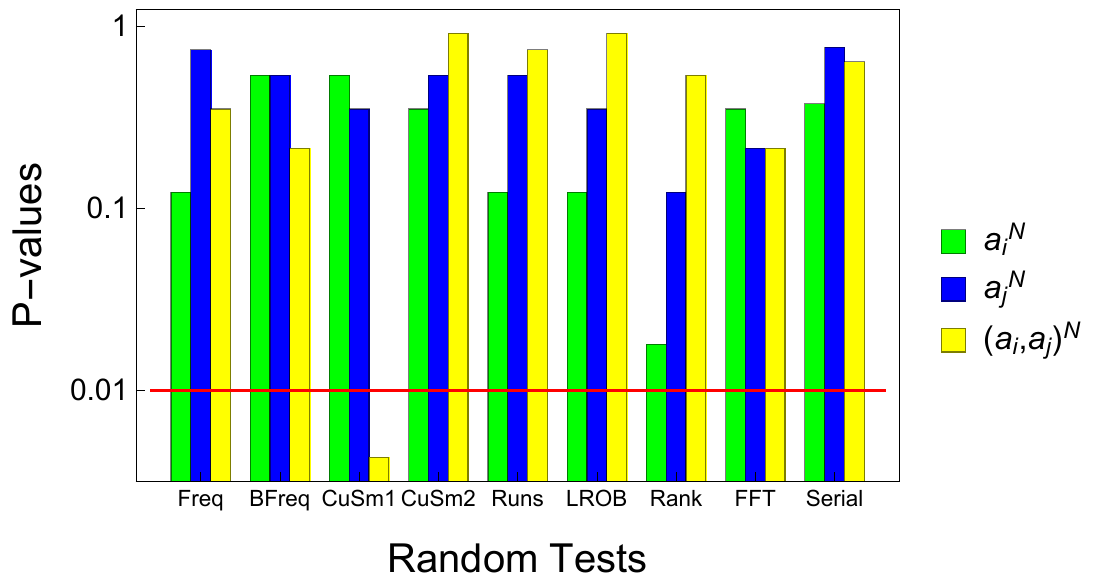}
\caption{The results for random tests \cite{Rukhin10} of the outputs of the first measurement $a_i$ and the second measurement $a_j$, and both measurement $a_i a_j$. Outputs of ${a_i}^N$ and ${a_j}^N$ pass the listed tests since all $p$-values exceed the threshold 0.01, while the outputs of ${(a_i a_j)}^N$ failed to pass the first test of 'Cumulative Sums (CuSm)'.
  }\label{fig5:RandomTest}
\end{figure}

\section*{Acknowledgements}
We thank Yaoyun Shi, Carl Miller, Kai-Min Chung, Cupjin Huang, and Xiao Yuan for helpful discussions. This work was supported by the National Key Research and Development Program of China under Grants No.~2016YFA0301900, No.~2016YFA0301901, No.~2017YFA0303900, and  No.~2017YFA0304004, and the National Natural Science Foundation of China Grants No.~11374178, No.~11574002, No.~11674193, and No.~11875173.

\subsection*{Corresponding authors} Correspondence to Xiongfeng Ma (xma@tsinghua.edu.cn) and Kihwan Kim (kimkihwan@mail.tsinghua.edu.cn).

\clearpage
\onecolumngrid

\begin{center}
\large{\textbf{Supplementary Materials: Randomness expansion secured by quantum contextuality}}
\end{center}

\section{Modified noncontextual inequality}
Among the KS inequalities, KCBS inequality, which uses five observables $A_i$ taking $\pm 1$, shows that with noncontexual hidden variables, the l.h.s of the inequality is no less than -3 \cite{klyachko2008simple},
\begin{equation}
\begin{aligned}
\avg{A_1 A_2} + \avg{A_3 A_2} + \avg{A_3 A_4} + \avg{A_5 A_4} + \avg{A_5 A_1} \geq -3.
\end{aligned}
\end{equation}
In practice, the observables $\avg{A_i A_j}$ have to be implemented in a sequential measurement. We denote the observalble $A_i$ with superscript $m$, $A_i^m$ as the measurement of $A_i$ at the position $m$ in the sequence. For example, $A_i^1A_j^2$ denotes the sequence of measuring $A_i$ first, then $A_j$.

Noncontexual HV model requires that the outcomes of any observable $A_i$ does not depend on other compatible jointly measured observables with $A_i$. To be more specific, we take $A_1$ as an example. It is compatible with $A_2$ and $A_5$. We denote the obtained value as $v$, then have $v(A_1^1)=v(A_1^2|A_2^1 A_1^2)$ and   $v(A_1^1)=v(A_1^2|A_5^1 A_1^2)$.

The assumption behind the above contextuality inequality is that the observables $A_i$ and $A_{i+1}$ (let $A_6 \equiv A_1$) are compatible. However, in an actual experiment using sequential measurements, the compatibility is not perfect which leads to the compatibility loophole.

In \cite{Gunhe10}, this imperfection can be quantified by
\begin{equation}
\begin{aligned}
p^{flip}[A_1A_2]=p[(A_2^1(+)|A_2^1)~and~(A_2^2(-)|A_1^1 A_2^2)]+p[(A_2^1(-)|A_2^1)~and~(A_2^2(+)|A_1^1 A_2^2)].
\end{aligned}
\end{equation}
Here $+,-$ denote the obtained value and this probability can be understood as the $A_1$ flips the predetermined value of $A_2$.
Then using the fact $\avg{A_1 A_2}\le \avg{A_1^1 A_2^2}+2p^{flip}[A_1A_2]$, the inequality can be modified as
\begin{equation}
\begin{aligned}
&\avg{A_1^1 A_2^2} + \avg{A_3^1 A_2^2} + \avg{A_3^1 A_4^2} + \avg{A_5^1 A_4^2} + \avg{A_5^1 A_1^2} \geq \\ &-3-2(p^{flip}[A_1A_2]+p^{flip}[A_3A_2]+p^{flip}[A_3A_4]+p^{flip}[A_5A_4]+p^{flip}[A_5A_1]).
\end{aligned}
\end{equation}
Note that this inequality holds for any HV models. 
In the experiment, $p^{flip}$ is not achieveable and different approaches are proposed to estimated with different assumptions. 
Here we use $\epsilon_{ij}$ to quantify the difference between a same pair of obervables $A_i$ and  $A_j$ in different time order, $A_i A_j$ and $A_j A_i$, which can be regarded as the bound of incompatibility of these sequential measurements,
\begin{equation}
\begin{aligned}
\left|\avg{A_j|A_j A_i}-\avg{A_j|A_i A_j}\right| \le \epsilon_{ij}.
\label{eq:epsilon}
\end{aligned}
\end{equation}
For experimentally accessible distributions, 
\begin{equation} 
\begin{aligned}
\left|p(A_i=a| A_iA_{i+1})- p(A_i=a| A_{i+1}A_i)\right|\le \epsilon_{ij}/2, 
\end{aligned}
\end{equation}
where $a\in \{+,-\}$.
We assume that the underlaying probability distributions have the same properties as all accessible distributions. Then $p^{flip}[A_1A_2]$ can be bounded by $\epsilon_{12}/2$ which is obtained in the experiments, $p^{flip}[A_1A_2]\le \epsilon_{12}/2$.
However, the probability distributions of a general HV model may not belong to the set of experimentally accessible probability distributions. We assume that this difference is negligible and that the properties verified in accessible experiments hold also for some of HV models.

Combining another modification in \cite{Szangolies13}, we apply an extended version of KCBS inequality
\begin{equation}
\begin{aligned}
\avg{ \chi_{KCBS}} = \avg{A_1 A_2} + \avg{A_3 A_2} + \avg{A_3 A_4} + \avg{A_5 A_4} + \avg{A_5 A_1} - \avg{A_1 A_1} \geq \\-4-(\epsilon_{12}+\epsilon_{32}+\epsilon_{34}+\epsilon_{54}+\epsilon_{51}+\epsilon_{11}),
\label{eq:finalKCBS}
\end{aligned}
\end{equation}
here for simplicity, we omit the time order superscript and $\langle A_i A_j \rangle$ denotes the expectation value of the measurement results in the time order of $A_i A_j$ for the sequential measurements; 

The above modifications of the inequality can be understood from the point of view of the game, which is played by two players $Alice$ and $Bob$ who receive random inputs for measurement settings without knowing the other's, similar to the Bell-inequality nonlocal game \cite{Colbeck07,Vazirani12,Colbeck2011private}. The score of each trial is calculated according to the inputs and outputs.
Each nonlocal game can be transformed into a contextuality game because no-communication local measurements is a stronger assumption and satisfy the compatible assumption. But on the contrary, not every contextuality game can be transformed into a nonlocal game.
The inequality with only terms of $\epsilon_{ij}$ is not a Bell inequality because it can also be violated by a simple classical strategy, two players output always opposite results.
Thus it is critical to have the term of $-\avg{A_1 A_1}$. In the following, it can be proved that the modified KCBS inequality even without $\epsilon_{ij}$ terms is a Bell inequality which cannot be violated by all classical local hidden means.
Inspired by a modified KCBS inequality, we propose a new Bell inequality, we assume that the measurements in different time order can not communicate with each other. With local hidden variable, the l.h.s of the inequality is no less than -4.
\begin{equation}
\begin{aligned}
\avg{A_1 A_2} + \avg{A_3 A_2} + \avg{A_3 A_4} + \avg{A_5 A_4} + \avg{A_5 A_1}-  \avg{A_1 A_1}\geq -4.
\end{aligned}
\end{equation}

\begin{proof}
\begin{equation}
\begin{aligned}
&\avg{A_1 A_2} + \avg{A_3 A_2} + \avg{A_3 A_4} + \avg{A_5 A_4} + \avg{A_5 A_1}-  \avg{A_1 A_1}\\
=&\avg{A_1 A_2} + \avg{A_3 A_2} + \avg{A_3 A_4} -\avg{A_1 A_4} + \avg{A_5 A_4} +    \avg{A_1 A_4}     +\avg{A_5 A_1}-  \avg{A_1 A_1}\\
\ge & \avg{A_1(A_2-A_4)}+ \avg{A_3(A_2+A_4)}-2
\end{aligned}
\end{equation}
The inequality holds because with local hidden variable, $\avg{A_5 A_4} + \avg{A_1 A_4}+\avg{A_5 A_1}-\avg{A_1 A_1}\ge -2$, which is a CHSH inequality. $A_i\in \{\pm 1\}$, either $A_2+A_4=0$ or $A_2-A_4=0$ will hold, thus $\avg{A_1(A_2-A_4)}+ \avg{A_3(A_2+A_4)}\ge -2$. The l.h.s is no less than -4 with local hidden variable.
\end{proof}
From the view of nonlocal game, it is critical to have the term $-\avg{A_1 A_1}$ in Eq.~\eqref{eq:finalKCBS}.

\section{Miller and Shi's security proof and its feasibility in practical case}
Here in this section, we mainly focus on the work \cite{Carl17} and overview their security proof.

The min entropy is used for evaluating the randomness. Given the output $X$ , conditioned on input $A$ and adversary' system $E$, the smooth min entropy $H_{min}^\delta(X|AE)$ is defined as
\begin{equation}
\begin{aligned}
H_{min}^\delta(X|AE)= \max \limits_{\lVert \Gamma^{'}-\Gamma_{AEX}\rVert\le \delta} H_{min}(X|AE)_{\Gamma^{'}}
\end{aligned}
\end{equation}

The direct estimation of min entropy is generally hard, thus their security proof applied Renyi entropy to give the lower bound of min entropy. For a quantum state $\rho$, its smooth min-entropies satisfy
\begin{equation}
\begin{aligned}
H_{min}^\delta(\rho)= H_{1+\varepsilon} (\rho)-\frac{\log(1/\delta)}{\varepsilon}
\end{aligned}
\end{equation}
where  $H_{1+\varepsilon} (\rho)=-\frac{1}{\varepsilon}\log \mathrm{Tr}[\rho^{1+\varepsilon}]$.
The randomness in its output is quantified by this $(1+\varepsilon)$-randomness.
The main tool proposed in this proof is a  $(1+\varepsilon)$-uncertain relation.
After a projective measurement, the amount of randomness ($(1+\varepsilon)$-randomness) obtained from a measurement is related to the degree of disturbance caused by the measurement, shown in Proposition 4.4. For a given fixed input, the device has a classically predicable output and can achievable maximal score is $w$. Then if device obtains a score higher than this threshold $w$, then there must be unpredictable randomness in the output of this device. The rate curve is achieved in Corollary 6.11. This security proof is general for not only nonlocal game but also for contextuality. The uncertain relation is only relevant to the size of output alphabet and the measurement in contextuality can fit this proposition. For different schemes, the major differences is the classically predicable bound $w$. Note that this bound $w$ is the maximal score for devices which has classically predictable outputs on an input. It is different with the classical strategy bound by hidden variable $C_G$ in general. Though different in the definition, the value can be the same for some specific cases, for example, nonlocal game with binary input in each party and contextuality shown in Appendix D of \cite{Carl17}. However, in the practical case, the measurements in contextuality is not compatible. Though the uncertain relation in Proposition 4.4 still holds, the remained problem is to calculate $w$ and check whether it equals to the classical bound achieved by approximately contextuail hidden variable. We express this KCBS game as
\begin{equation}
\begin{aligned}
G(A_1,A_2,A_3,A_4,A_5)=-\frac{1}{6}(A_1^1A_2^2+A_3^1A_2^2+A_3^1A_4^2+A_5^1A_4^2+A_5^1A_1^2-A_1^1A_1^2+\epsilon_{12}+\epsilon_{32}+\epsilon_{34}+\epsilon_{54}+\epsilon_{51}+\epsilon_{11}).
\end{aligned}
\end{equation}

\begin{proposition}
Let G be the game given above , $w=2/3$
\end{proposition}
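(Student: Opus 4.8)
The plan is to prove the two inequalities $w \ge 2/3$ and $w \le 2/3$ separately, where $w$ is the Miller--Shi deterministic-output value, i.e.\ the supremum of the expected game score over all devices whose output on the distinguished generation input $\{1,2\}$ is classically predictable. The two directions have very different characters: the lower bound is an explicit construction, while the upper bound is where essentially all the work lies.

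\emph{Lower bound.} First I would exhibit a single deterministic strategy saturating $2/3$. The decomposition in SM.I rewrites the $\epsilon$-free left-hand side as a CHSH group $\avg{A_5 A_4}+\avg{A_1 A_4}+\avg{A_5 A_1}-\avg{A_1 A_1}$ plus an XOR group $\avg{A_1(A_2-A_4)}+\avg{A_3(A_2+A_4)}$, each bounded below by $-2$. Both bounds are simultaneously attainable by a suitable $\pm1$ assignment to $A_1,\dots,A_5$ (for instance $(a_1,\dots,a_5)=(1,-1,1,-1,1)$ drives both groups to $-2$), so the left-hand side reaches $-4$ and the score reaches $2/3$. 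For any deterministic consistent assignment the value of $A_j$ is independent of measurement order, hence every $\epsilon_{ij}=0$ and the score is unchanged. Since this device is in particular deterministic on the generation input, $w \ge 2/3$.

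\emph{Upper bound.} This is the substance. I would argue that no device with deterministic output on $\{1,2\}$ can exceed $2/3$. The first step is to note that the SM.I bound $\avg{\chi_{KCBS}} \ge -4-\sum\epsilon$ follows from $\avg{A_i A_j}\le\avg{A_i^1A_j^2}+2p^{\mathrm{flip}}[A_iA_j]$ together with $p^{\mathrm{flip}}\le\epsilon/2$; the $\epsilon$ penalties in the score are therefore calibrated to absorb precisely the advantage a strategy gains by letting the first measurement signal to the second. Consequently the $\epsilon$-penalized score is capped as though the two sequential measurements formed a no-signaling pair, reducing the problem to the genuine binary-output nonlocal game identified in SM.I (the modified inequality, even without the $\epsilon$'s, being a Bell inequality with $-\avg{A_1A_1}$ supplying the CHSH correction). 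The second step is to run the argument of Appendix D of \cite{Carl17} on this binary game: once the output on the generation input is deterministic, one derandomizes the device and shows that its value cannot exceed the classical value $C_G$. Combining with $C_G=2/3$ from SM.I gives $w\le 2/3$.

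\emph{Main obstacle.} The hard part is upgrading $w\ge C_G$ to the equality $w=C_G$, i.e.\ certifying that quantum, or merely $\epsilon$-bounded-signaling, devices gain nothing from the freedom of being deterministic only on the generation input. This rests on two checks: that our game genuinely has the binary-output nonlocal-game form required by Appendix D of \cite{Carl17} (supplied by the CHSH $+$ XOR decomposition of SM.I), and that the compatibility modification enters only through the classical bound, leaving the uncertainty relation of Proposition~4.4 and the value computation intact. The most delicate point is that, unlike two spatially separated parties, the two sequential measurements share one system and can communicate; I expect the crux of a rigorous argument to be verifying that the $\epsilon$-penalized score indeed caps this communication advantage, so that the effective game behaves like a no-signaling binary game of value $2/3$ rather than merely bounding the non-signaling sub-class of devices.
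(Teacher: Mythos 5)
Your lower bound is fine (the assignment $(1,-1,1,-1,1)$ does reach score $2/3$ and is trivially deterministic on the generation input), but the upper bound --- which you correctly identify as the whole content of the proposition --- is not actually proved: you reduce it to (i) an unverified claim that the $\epsilon$-penalties exactly absorb the signalling advantage of sequential measurements, and (ii) an appeal to Appendix D of Miller--Shi. Point (ii) is precisely what the paper's SM.II says does \emph{not} apply here: Appendix D covers nonlocal games with binary inputs per party and genuinely compatible contextuality measurements, and the ``remained problem'' the proposition is written to solve is to compute $w$ for the modified, incompatible-measurement game directly. Your phrasing ``shows that its value cannot exceed the classical value $C_G$'' also obscures the real issue: the unrestricted quantum value of this game is $\approx 0.824 > 2/3$, so nothing caps general devices at $C_G$; the only thing available is determinism on the single distinguished input, and the mechanism by which that one constraint collapses the value to $2/3$ is exactly the step you leave open.

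The paper's own proof is short, self-contained, and does not route through the CHSH$+$XOR decomposition or any nonlocal-game machinery. Suppose a device deterministic on input $\{1,2\}$ scores above $2/3$, i.e.\ $\avg{\chi_{KCBS}}<-4$ including the $\epsilon$ terms. For $\pm1$-valued outcomes one always has $\avg{A_iA_j}\ge -1+\left|\avg{A_i}+\avg{A_j}\right|$, and the flip-probability bound gives $\avg{A_i^1A_j^2}+\epsilon_{ij}\ge\avg{A_iA_j}$, while $-\avg{A_1^1A_1^2}+\epsilon_{11}\ge -1$. Summing over the five edges of the pentagon yields $\avg{\chi_{KCBS}}\ge -6+\sum\left|\avg{A_i}+\avg{A_j}\right|$, and because the $5$-cycle is odd, an alternating-sign triangle inequality telescopes the sum down to $\left|2\avg{A_1}\right|$. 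Determinism on the generation input forces $\avg{A_1}=\pm1$, hence $\avg{\chi_{KCBS}}\ge -4$, a contradiction. You should replace your step (i)--(ii) with this argument (or an equivalent one): it is exactly the point where the odd-cycle structure and the determinism hypothesis interact, and no amount of reduction to a no-signalling binary game substitutes for it.
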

\begin{proof}
With the approximately noncontextual hidden variable, the maximal score is $C_G=2/3$. This strategy is classically predictable, thus the maximal score $w$ with an input classically predictable should not be less than $C_G$, i.e. $w\ge C_G$. We suppose that there is a device $D$ (can be quantum) applied in KCBS game which outputs a score above 2/3, and which gives a deterministic output on input 1,
\begin{equation}
\begin{aligned}
-4\ge \avg{\chi_{KCBS}},
\end{aligned}
\end{equation}
where $\avg{\chi_{KCBS}}= \avg{A_1^1A_2^2}+\avg{A_3^1A_2^2}+\avg{A_3^1A_4^2}+\avg{A_5^1A_4^2}+\avg{A_5^1A_1^2}-\avg{A_1^1A_1^2}+\epsilon_{12}+\epsilon_{32}+\epsilon_{34}+\epsilon_{54}+\epsilon_{51}+\epsilon_{11}$ is the practical mean value with sequential measurements.
Due to $ \avg{A_i A_j}\ge -1+| \avg{A_i}+ \avg{A_j}|$, $\avg{A_i A_j}\le \avg{A_i^1 A_j^2}+2p^{flip}[A_iA_j]$ and $p^{flip}[A_iA_j]\le \epsilon_{ij}$, we have
\begin{equation}
\begin{aligned}
\avg{\chi_{KCBS}}&\ge -6+ | \avg{A_1}+ \avg{A_2}|+| \avg{A_3}+ \avg{A_2}|+| \avg{A_3}+ \avg{A_4}|+| \avg{A_5}+ \avg{A_4}|+| \avg{A_5}+ \avg{A_1}|\\
&\ge -6+ | \avg{A_1}+ \avg{A_2}|+| \avg{-A_2}-\avg{A_3}|+| \avg{A_3}- \avg{-A_4}|+| \avg{-A_4}- \avg{A_5}|+| \avg{A_5}- \avg{-A_1}|.
\end{aligned}
\end{equation}
Therefore, with the triangle inequality,
\begin{equation}
\begin{aligned}
-4\ge -6+|\avg{A_1}-\avg{-A_1}|.
\end{aligned}
\end{equation}
The fixed input 1 is deterministic, thus $\avg{A_1}=\pm 1$, this is a contradiction. Thus $w\le C_G=2/3$ and $w=2/3$.
\end{proof}
With this proposition, any score above $w$ can be used to generate randomness though the observables are approximately compatible.

\section{Randomness generation rate}
Here in this section, based on the work \cite{Carl17} we give an exact result for the randomness expansion rate.
The min entropy is used for evaluating the randomness.
Combining Theorem 4.1 and Proposition 6.8 in \cite{Carl17} yields
\begin{equation}
\begin{aligned}\label{rate1}
&H_{min}^\delta(X|AE)\ge N [\pi(\chi)-O(q+\epsilon/q+\frac{\log(2/\delta^2)}{N\epsilon})]\\
\end{aligned}
\end{equation}
where $O(\frac{\log(2/\delta^2)}{N\epsilon})$ and $O(q+\epsilon/q)$ come from Theorem 3.2 and Proposition 6.8, respectively. From Theorem 3.2, we can let
$O(\frac{\log(2/\delta^2)}{N\epsilon})=\frac{\log(2/\delta^2)}{N\epsilon}$. $O(q+\epsilon/q)$ comes from Proposition 6.5, the combination of Proposition 6.3 and 6.4.
In the proof of Proposition 6.4, from Eq.(6.25) to Eq.(6.26) is equivalent to
\begin{equation}
\begin{aligned}\label{rate2}
\frac{\sum_x \langle\rho^x_{\bar{a}} \rangle_{1+\epsilon}}{\langle\rho \rangle_{1+\epsilon}}\ge 1-O(\epsilon)
\end{aligned}
\end{equation}
where $x$ is the output with output alphabet size $r$, and $\bar{a}$ is the input.
According to the Proposition B.2 and Proposition B.3 in Carl's paper, we apply the induction, $\sum_x \langle\rho^x_{\bar{a}} \rangle_{1+\epsilon}\ge (1-\epsilon)^r \langle\sum_x \rho^x_{\bar{a}} \rangle_{1+\epsilon}$ and $\langle\sum_x \rho^x_{\bar{a}} \rangle_{1+\epsilon}\ge (1-\epsilon)^r \langle\rho \rangle_{1+\epsilon}$. Thus $\frac{\sum_x \langle\rho^x_{\bar{a}} \rangle_{1+\epsilon}}{\langle\rho \rangle_{1+\epsilon}}\ge (1-\epsilon)^{2r}
\ge 1-2r\epsilon$ and $O(\epsilon)=2r\epsilon$. Consequently, the term in Proposition 6.4 $O(q)=2rq$.

The estimation in Proposition 6.3 comes from the second order terms in Taylor expansion in Eq.(6.20) and Eq.(6.21). For a function $F(x)$, its Taylor expansion at $a$ is as follows,
\begin{equation}
\begin{aligned}\label{rate3}
F(b)=F(a)+F^{'}(a)(b-a)+\frac{F^{''}(a)}{2}(b-a)^2+\frac{F^{'''}[a      +\theta(b-a)]}{6}(b-a)^3,\theta\in (0,1)
\end{aligned}
\end{equation}
where the fourth term is third order Taylor Lagrange remainder. Here $F(b)=2^{\epsilon s H(a,x)/q}$ and $a=0$.
\begin{equation}
\begin{aligned}\label{rate4}
&2^{\epsilon s H(a,x)/q}-1=\epsilon s \left(\ln2\right) H(a,x)/q+\frac{1}{2}\left(\frac{\epsilon s \left(\ln2\right) H(a,x)}{q}\right)^2+R_3  \\
&R_3=\frac{1}{6}\left(\frac{\epsilon s \left(\ln2\right) H(a,x)}{q}\right)^3 2^{\theta \epsilon s H(a,x)/q}, \theta\in (0,1)
\end{aligned}
\end{equation}
where the term $R_3$ is the third order Taylor Lagrange remainder.
Substitute this expression in Eq.(6.20), we have
\begin{equation}
\begin{aligned}\label{rate5}
&\sum_{a,x} p(a)\left[\frac{1}{2}\left(\frac{\epsilon s\left(\ln2\right) H(a,x)}{q}\right)^2+R_3\right] \langle\rho^x_{a} \rangle_{1+\epsilon} \\
&\le \left[\frac{1}{2} \left(\frac{\epsilon s\left(\ln2\right)}{q}\right)^2 +\frac{1}{6}\left(\frac{\epsilon s\left(\ln2\right)}{q}\right)^3 2^{\epsilon s/q}\right]\sum_{a,x} p(a)H(a,x)  \langle\rho^x_{a} \rangle_{1+\epsilon}  \\
&\le \frac{1}{2} \left(\frac{\epsilon s\left(\ln2\right)}{q}\right)^2 +\frac{1}{6}\left(\frac{\epsilon s\left(\ln2\right)}{q}\right)^3 2^{\epsilon s/q}
\end{aligned}
\end{equation}
After applying the function $-\frac{1}{\epsilon}\log()$, we have a more precise result similar to Proposition 6.3.
The difference is we replace the $O(\epsilon/q)$ by $ \frac{\epsilon}{q} \frac{\left(\ln2\right)s^2}{2} + (\frac{\epsilon}{q})^2 \frac{\left(\ln2\right)^2s^3}{6}2^{\epsilon s/q}$.
In the Theorem 6.7, we let the parameter $s$ be $\pi^{'}(\chi)$ .
In the Theorem 5.8, we know that
\begin{equation}
\begin{aligned}\label{rate6}
&\pi(\chi)=2\frac{\log(e)(\chi-w)^2}{r-1}\\
&\pi^{'}(\chi)=4\frac{\log(e)(\chi-w)}{r-1}
\end{aligned}
\end{equation}
Thus
\begin{equation}
\begin{aligned}\label{rate7}
O(\epsilon/q)=\frac{\epsilon}{q} \frac{8\log(e)(\chi-w)^2}{(r-1)^2} +  \left(\frac{\epsilon}{q}\right)^2 \frac{32 \log(e)(\chi-w)^3}{3(r-1)^3}      2^{\epsilon 4\frac{\log(e)(\chi-w)}{(r-1)q}}
\end{aligned}
\end{equation}
\textbf{Result 1}
\begin{equation}
\begin{aligned}\label{rate8}
&H_{min}^\delta(X|AE)\ge N [\pi(\chi)-\Delta]\\
&\pi(\chi)=2\frac{\log(e)(\chi-w)^2}{r-1}\\
&\Delta=\frac{\epsilon}{q} \frac{8\log(e)(\chi-w)^2}{(r-1)^2} +  \left(\frac{\epsilon}{q}\right)^2 \frac{32 \log(e)(\chi-w)^3}{3(r-1)^3}      2^{\frac{\epsilon}{q} \frac{4\log(e)(\chi-w)}{r-1}}+\frac{\log(2/\delta^2)}{N\epsilon}+2rq
\end{aligned}
\end{equation}
where $\chi\in[0,1]$ is the score obtained in experiments, $w$ is the classical bound for a certain game, $r$ is the number of total outputs, $q$ is the probability for test round, N is the total round number, $\delta$ is the failure probability, $\epsilon\in (0,1]$ is the .
The randomness expansion, generation, and input rate per round are
\begin{equation}
\begin{aligned}
R_{exp}&=R_{gen}-R_{In},\\
R_{gen}&=\pi(\chi)-\Delta,\\
R_{In}&=q\log11 +H(q).
\end{aligned}
\end{equation}

If we focus on the randomness expansion instead of the generation randomness, we should consider the random seed $H(q)+q\log 11$ used for random inputs. Different target function have different optimal result, the figures in main text shows the effect of optimization parameter.
Note that from the Result 1, the generated randomness is $O\left(N\right)$, and we take the probability $q\sim (\log^3 N)/N$, then the initial random seed required is $q\log11+H\left(q\right)$. And due to $\log N< N$, $q\log11+H\left(q\right)\sim O\left(q\right)+q \log \left(\left(\log^3N\right) /N\right)< O\left(\log^4 N\right)$. Thus compared with the generated randomness $O\left(N\right)$, exponential randomness expansion is achieved.

\section{Improved rate curve}
The important uncertain relation is related to the output alphabet size $r$. A larger $r$ will lead to a bad performance. This disadvantage is removed by an improved uncertain relation.
A tighter bound of Proposition 4.4 proposed by Ref. \cite{Huang17} is as follows.

\begin{lemma}
For any finite dimensional Hilbert space $V$ , any positive semidefinite operator $\tau: V \rightarrow V$, and any projective measurement $\{P_0, P_1,\cdots, P_n\}$ on $V$ , the following holds. Let $\tau^{'}=\sum_i P_i\tau P_i$. Then

\begin{equation}
\|\tau^{'}\|^2_{1+\epsilon}\le \|\tau\|^2_{1+\epsilon}-\epsilon \|\tau-\tau^{'}\|^2_{1+\epsilon}
\end{equation}
for all $\epsilon\in (0,1)$.
Consequently,
\begin{equation}
\|\tau^{'}\|_{1+\epsilon}\le \|\tau\|^2_{1+\epsilon}-\epsilon/2 \|\tau-\tau^{'}\|^2_{1+\epsilon}.
\end{equation}
\end{lemma}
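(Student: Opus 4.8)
The plan is to recognize the first inequality as a statement of sharp $2$-uniform convexity for the Schatten $(1+\epsilon)$-norm, with $p:=1+\epsilon\in(1,2)$, and to reduce the multi-outcome pinching to that convexity by a unitary-averaging trick. The engine is the Ball--Carlen--Lieb inequality: for $1<p\le 2$ and any $A,B$, $\lVert\tfrac{A+B}{2}\rVert_p^2+(p-1)\lVert\tfrac{A-B}{2}\rVert_p^2\le\tfrac12(\lVert A\rVert_p^2+\lVert B\rVert_p^2)$, which is equivalent to saying that $\Phi(\cdot):=\lVert\cdot\rVert_p^2$ is $2(p-1)$-strongly convex with respect to $\lVert\cdot\rVert_p$. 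As a warm-up for a two-outcome measurement $\{P,I-P\}$ one writes $\tau'=\tfrac12(\tau+U\tau U^\dagger)$ and $\tau-\tau'=\tfrac12(\tau-U\tau U^\dagger)$ with the reflection $U=2P-I$ unitary; applying the inequality with $A=\tau$, $B=U\tau U^\dagger$ (equal norms, by unitary invariance of $\lVert\cdot\rVert_p$) yields the claim with constant $p-1=\epsilon$ at once.

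For the general measurement $\{P_0,\dots,P_n\}$ I would represent the pinching as a uniform average over the $m:=n+1$ phase unitaries $U_j:=\sum_{k=0}^{n}\omega^{jk}P_k$ with $\omega:=e^{2\pi i/m}$ and $j=0,\dots,n$. Each $U_j$ is unitary, $U_0=I$, and the orthogonality $\tfrac1m\sum_j\omega^{j(k-l)}=\delta_{kl}$ gives $\tfrac1m\sum_j U_j\tau U_j^\dagger=\sum_k P_k\tau P_k=\tau'$. Writing $x_j:=U_j\tau U_j^\dagger$, unitary invariance gives $\lVert x_j\rVert_p=\lVert\tau\rVert_p$ for every $j$, while $x_0=\tau$. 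The decisive observation is that $U_j$ commutes with each $P_k$, so $U_j\tau' U_j^\dagger=\tau'$, whence $x_j-\tau'=U_j(\tau-\tau')U_j^\dagger$ and therefore $\lVert x_j-\tau'\rVert_p=\lVert\tau-\tau'\rVert_p$ for all $j$.

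Next I would feed the orbit $\{x_j\}$, whose mean is $\bar x=\tau'$, into the variance form of strong convexity. Choosing a subgradient $g\in\partial\Phi(\tau')$ and using $2(p-1)$-strong convexity gives $\Phi(x_j)\ge\Phi(\tau')+\langle g,\,x_j-\tau'\rangle+(p-1)\lVert x_j-\tau'\rVert_p^2$; averaging over $j$ annihilates the linear term because $\tfrac1m\sum_j(x_j-\tau')=0$, leaving $\tfrac1m\sum_j\lVert x_j\rVert_p^2\ge\lVert\tau'\rVert_p^2+(p-1)\,\tfrac1m\sum_j\lVert x_j-\tau'\rVert_p^2$. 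Substituting the two equalities of the previous step collapses both averages and produces exactly $\lVert\tau'\rVert_p^2\le\lVert\tau\rVert_p^2-\epsilon\lVert\tau-\tau'\rVert_p^2$. The ``consequently'' bound is then immediate from $\sqrt{a^2-b}\le a-b/(2a)$ (equivalently $\sqrt{1-t}\le 1-t/2$).

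The main obstacle I anticipate is securing the sharp constant: one must import Ball--Carlen--Lieb at its optimal value $(p-1)$ and check that the passage to the multi-point variance inequality loses nothing, i.e. the equivalence between sharp $2$-uniform convexity and $2(p-1)$-strong convexity of $\lVert\cdot\rVert_p^2$, together with subgradient existence for $\lVert\cdot\rVert_p^2$ (smooth away from $0$, with $0\in\partial\Phi(0)$). It is worth emphasizing why the averaging is essential rather than cosmetic: a naive induction that peels off one projector at a time and telescopes the two-outcome bound fails, because $\lVert\cdot\rVert_p^2$ is \emph{not} superadditive over the block-orthogonal (``hook'') decomposition $\tau-\tau'=\sum_t D_t$ when $p<2$ (already a $3\times 3$ example gives $\lVert\tau-\tau'\rVert_p^2>\sum_t\lVert D_t\rVert_p^2$), so only the simultaneous treatment of all blocks recovers the full off-diagonal penalty $\lVert\tau-\tau'\rVert_p^2$.
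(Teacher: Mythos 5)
The paper does not actually prove this lemma: it is imported as a black box from the private communication of Huang and Shi and used only to improve the Miller--Shi rate curve, so there is no in-paper argument to compare yours against. Judged on its own, your derivation is correct and self-contained. The two-outcome warm-up via the reflection $U=2P-I$ and Ball--Carlen--Lieb already yields the sharp constant $\epsilon=p-1$; the passage to $n+1$ outcomes through the phase unitaries $U_j=\sum_k\omega^{jk}P_k$ is the standard representation of pinching as a unitary average, and the two invariances you isolate ($\|U_j\tau U_j^\dagger\|_p=\|\tau\|_p$, and, because $U_j$ commutes with every $P_k$, $\|U_j\tau U_j^\dagger-\tau'\|_p=\|\tau-\tau'\|_p$) are exactly what make the averaged subgradient inequality collapse to the claimed bound with no loss in the constant. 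The step you flag as the main risk --- upgrading the midpoint (BCL) inequality to the subgradient form of $2(p-1)$-strong convexity of $\|\cdot\|_p^2$ --- is a standard equivalence for continuous convex functions: the dyadic doubling argument reproduces the modulus $c\,\lambda(1-\lambda)\|x-y\|^2$ with the same constant, and the subgradient form follows by letting $\lambda\to1$, so nothing is lost there. Your observation that the naive projector-by-projector telescoping fails for $p<2$ is also well taken and explains why the simultaneous averaging is genuinely needed rather than cosmetic.

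One caveat concerning the statement rather than your proof: the ``consequently'' display is dimensionally inconsistent as printed ($\|\tau'\|_{1+\epsilon}$ on the left against $\|\tau\|^2_{1+\epsilon}$ on the right), and your deduction from $\sqrt{a^2-b}\le a-b/(2a)$ actually gives $\|\tau'\|_{1+\epsilon}\le\|\tau\|_{1+\epsilon}-\tfrac{\epsilon}{2}\|\tau-\tau'\|^2_{1+\epsilon}/\|\tau\|_{1+\epsilon}$, which yields the intended Miller--Shi-style corollary only under the normalization $\|\tau\|_{1+\epsilon}\le1$. That normalization holds where the lemma is applied (subnormalized states), but you should state it explicitly rather than let it ride on the paper's typo.
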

This result can be applied in Theorem 5.8 and obtain a new rate curve,
\begin{equation}
\begin{aligned}\label{rate9}
\pi(\chi)=2\log(e)(\chi-w)^2~if~\chi \ge w.
\end{aligned}
\end{equation}
Consequently, we have $\pi^{'}(\chi)=4log(e)(\chi-w) $, and let the parameter $s$ be $\pi^{'}(\chi)$ in $O(\epsilon/q)$ by $ \frac{\epsilon}{q} \frac{\left(\ln2\right)s^2}{2} +  (\frac{\epsilon}{q})^2 \frac{\left(\ln2\right)^2s^3}{6}2^{\epsilon s/q}$.
Then
\begin{equation}
O(\epsilon/q)=\frac{\epsilon}{q} 8\log(e)(\chi-w)^2 + \left(\frac{\epsilon}{q}\right)^2 \frac{32 \log(e)(\chi-w)^3}{3}2^{\frac{\epsilon}{q} 4\log(e)(\chi-w)}.
\end{equation}

\textbf{Result 2}
\begin{equation}
\begin{aligned}\label{rate10}
&H_{min}^\delta(X|AE)\ge N [\pi(\chi)-\Delta]\\
&\pi(\chi)=2log(e)(\chi-w)^2\\
&\Delta=\frac{\epsilon}{q} 8log(e)(\chi-w)^2 +  \left(\frac{\epsilon}{q}\right)^2 \frac{32 log(e)(\chi-w)^3}{3}2^{\epsilon 4\frac{log(e)(\chi-w)}{q}}+\frac{log(2/\delta^2)}{N\epsilon}+2rq
\end{aligned}
\end{equation}

\section{Improvement of random number generation speed}
Currently, each round costs 3700 $\mu$s, which is consisted of 1500 $\mu$s cooling process, two detections procedures 900 $\mu$s in total, 140 $\mu$s spin echo pulses for the first detection, two optical pumping pulses 60 $\mu$s in total, rotations 60 $\mu$s in total, some short gaps between sequences to make sure they do not affect each other, and around 1000 $\mu$s communication time. However, there is room for technical improvement as follows. By extending coherence time between qutrit, spin echo will not be required. Detection time could be reduce to around 100 $\mu$s by replacing a high numerical aperture (NA) lens from 0.2 to 0.6. By amplifying 1762 $\mu$m laser power 10 times, Rabi oscillations between $\ket{1}$ and $\ket{3}$, and between $\ket{2}$ and $\ket{3}$ can be at least 3 times faster, so as the rotation. Each optical pumping could be reduced to 1 $\mu$s by further optimization. Currently we apply 1500 $\mu$s cooling process each round, but it will be possible to apply only one cooling process per ten rounds after some improvements. With all the development above, we can achieve at least one order faster generation speed.

\end{document}